\documentclass{amsart}

\usepackage{enumerate}
\usepackage{amssymb}
\usepackage{amsfonts}
\usepackage{latexsym}
\usepackage{amsmath}
\usepackage{euscript}
\usepackage{graphics}
\usepackage{graphicx}
\usepackage{tikz}
\usetikzlibrary{spy}
\graphicspath{ {./images/} }

\newtheorem{theorem}{Theorem}[section]
\newtheorem{proposition}[theorem]{Proposition}

\newtheorem{lemma}[theorem]{Lemma}
\newtheorem{definition}[theorem]{Definition}
\theoremstyle{remark}

\newtheorem{remark}[theorem]{Remark}

\numberwithin{equation}{section}

\def\diag{{\rm diag\,}}

\def\RR{\mathbb{R}}

\def\CC{\mathbb{C}}

\def\tr{\textrm{Tr}}

\def\1{\mathbb{1}}

\newmuskip\pFqmuskip

\newcommand*\pFq[6][8]{%
  \begingroup 
  \pFqmuskip=#1mu\relax
  \mathcode`\,=\string"8000
  \begingroup\lccode`\~=`\,
  \lowercase{\endgroup\let~}\pFqcomma
  {}_{#2}F_{#3}{\left[\genfrac..{0pt}{}{#4}{#5};#6\right]}%
  \endgroup
}
\newcommand{\pFqcomma}{\mskip\pFqmuskip}

\usepackage[colorinlistoftodos]{todonotes}


\title[Jacobi matrices that realize PST and ESE]{Jacobi matrices that realize perfect quantum state transfer and Early State Exclusion}

\author[R.~Bailey]{Rachel~Bailey}
\address{
RB,
Department of Mathematical Sciences\\
Bentley University\\
175 Forest Street\\
Waltham, MA 02452, USA}
\email{rbailey@bentley.edu}

\author[S.~Costa]{Sara~Costa}
\address{
SC,
Department of Mathematics\\
University of Hartford\\
200 Bloomfield Avenue\\
West Hartford, CT 06117, USA}
\email{costa.a.sara33@gmail.com}

\author[M.~Derevyagin]{Maxim~Derevyagin}
\address{
MD,
Department of Mathematics\\
University of Connecticut\\
341 Mansfield Road, U-1009\\
Storrs, CT 06269-1009, USA}
\email{maksym.derevyagin@uconn.edu}

\author[C.~Findley]{Caleb~Findley}
\address{
CF,
Department of Mathematics\\
University of Texas Arlington\\
701 S. Nedderman Drive\\
Arlington, TX 76019, USA}
\email{scf0853@mavs.uta.edu}

\author[K.~Zuang]{Kai~Zuang}
\address{
KZ,
Department of Mathematics\\
Brown University\\
151 Thayer Street\\
Providence, RI 02912, USA}
\email{kaizuang@gmail.com}

\subjclass{Primary 15A29, 81P45; Secondary 47B36, 33C45}
\keywords{Jacobi matrices, inverse problems, quantum information, Krawtchouk polynomials, Chebyshev polynomials}

\begin{document}

\begin{abstract}
In this paper we show how to construct 1D Hamiltonians, that is, Jacobi matrices, that realize perfect quantum state transfer and also have the property that the overlap of the time evolved state with the initial state is zero for some time before the transfer time. If the latter takes place we call it an early exclusion state. We also show that in some case early state exclusion is impossible. The proofs rely on  properties of Krawtchouk and Chebyshev polynomials. 
\end{abstract}

\maketitle

\section{Introduction}

In recent years there has been interest in studying various mathematical aspects of perfect quantum state transfer \cite{T1}, \cite{DDMT}, \cite{Kay10}, \cite{KMPPZ}, \cite{VZh12}, and \cite{T2}. To give a quick account of what this is from the mathematical perspective, let us consider a Jacobi matrix $J$, that is, a symmetric tridiagonal matrix of the form
\[
J=\begin{pmatrix}
		a_0 & b_0    &  &  \\
		b_0 & a_1    & \ddots &  \\
		& \ddots & \ddots & b_{N-1} \\
		&        & b_{N-1}& a_{N}     \\
	\end{pmatrix},
	\]
where $a_k$'s are real numbers and $b_k$'s are positive numbers. Then $J$ represents the Hamiltonian of the system of $N+1$ particles with the nearest-neighbor interactions. The evolution of the system is given by $e^{-itJ}$. We say that $J$ realizes Perfect State Transfer (or, shortly, PST) if there exist $T_0>0$ and $\phi\in\RR$ such that
\begin{equation}\label{PST}
e^{-iT_0J}{\bf e}_0=e^{i\phi}{\bf e}_N,
\end{equation}
where ${\bf e}_0$ and ${\bf e}_N$ are the first and last elements of the standard basis 
$\{ {\bf e}_k\}_{k=0}^N$ of $\CC^{N+1}$. In other words, this means that in such a system, qubits get transferred from site $1$ to site $N+1$ in time $T_0$. Note that we can only know probabilities of quantum states and \eqref{PST} describes the situation of quantum transfer with probability $1$, which is a desirable but rare phenomenon. Nevertheless, theoretically one can easily construct such Hamiltonians since \eqref{PST} is equivalent to the two conditions (see \cite{Kay10} or \cite{VZh12})
\begin{enumerate}
\item[(i)] $J$ is persymmetric, that is,
\[
\begin{split}
a_{k}=a_{N-k},\quad k=0,1,\dots, N, \\
 b_{l}=b_{N-1-l}, \quad l=0,1,\dots, N-1;    
\end{split}
\]
\item[(ii)] the eigenvalues $\lambda_0<\lambda_1<\dots<\lambda_N$ of $J$ satisfy
\begin{equation}\label{PST_eigenvalues}
    \lambda_{k+1}-\lambda_k=\frac{(2n_k+1)\pi}{T_0}, \quad k=0,...,N,
\end{equation} 
where $n_k$ are nonnegative integers. 
\end{enumerate}
This characterization gives some flexibility for engineering quantum wires. Once such a wire is built, one can potentially attempt to speed up the transfer of several qubits by initiating the second transfer at time $t$ earlier than $T_0$. This could be possible if we know that the overlap of the time evolved state with the initial state is zero for some time before the transfer time. Thus it makes sense to introduce the following concept. 

\begin{definition}(Early State Exclusion)
Let J be a Jacobi matrix that has earliest perfect state transfer at time $T_0$. If there is a time $t$ such that  $0<t<T_0$ and
\[
(e^{-iJt}{\bf e}_0,{\bf e_0})_{\CC^{N+1}} = 0
\]
then we say that J has $\textbf{Early State Exclusion}$, (ESE), at time $t$.
\end{definition}
By looking at the above definition, it is not clear if such a $J$ exists and in fact the third author has learned that the question whether such $J$ exists is open from Christino Tamon who attributed the question to Alastair Kay. In this note we show that the answer to this question is positive and construct an infinite family of such Hamiltonians for any odd $N\ge 3$.

\section{Trial and error}

In this section we will consider the situation of Early State Exclusion for matrices of sizes $2$, $3$, and $4$, which corresponds to $N=1$, $N=2$, and $N=3$, respectively.

\subsection{The case of $2\times 2$-matrices.} Based on the characterization given in the previous section, any symmetric, persymmetric  matrix
\[
J=\begin{pmatrix}
a_0&b_0\\
b_0&a_0
\end{pmatrix}, \quad  a_0\in \mathbb{R}, \quad b_0> 0
\]
realizes a PST since in this case we can always find $T_0$ such that $\lambda_1-\lambda_0=\pi/T_0$, where $\lambda_1, (>)\, \lambda_0$ are the eigenvalues of $J$. However, since we are looking at the zeroes of $(e^{-iJt}{\bf e}_0,{\bf e}_0)$ we can simplify the form of $J$ further. Namely, without loss of generality we can assume that the matrix $J$ has the form
\[
J=\begin{pmatrix}
0&1\\
1&0
\end{pmatrix}
\]
Indeed, the shift $J\to J-a_0I$ simply brings a factor of $e^{ia_0t}$, which is never 0, to the function in question and then rescaling $b_0$ to $1$ only changes the transfer time $T_0$. Next, one can find that
\[
e^{-iJt}{\bf e}_0=e^{-it\begin{pmatrix}
0&1\\
1&0
\end{pmatrix}}\begin{pmatrix}
1\\
0
\end{pmatrix}=\begin{pmatrix}
\cos t\\
-i\sin t
\end{pmatrix}.
\]
The latter relation shows that if $(e^{-iJt}{\bf e}_0,{\bf e_0})_{\CC^{N+1}}=0$ then the magnitude of the second component of $e^{-iJt}{\bf e}_0$ is $1$, which in turn implies that PST takes place at time $t$. As a result, $2\times 2$ Jacobi matrices do not have ESE.
\subsection{The case of $3\times 3$-matrices.}
Here we need to do a bit more computations and so we will formulate the result first.
\begin{proposition}
    Let $J$ be a $3\times 3$ Jacobi matrix such that PST occurs for the first time at time $T_0$. 
    Then J does not have ESE.
\end{proposition}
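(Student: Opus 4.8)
The plan is to reduce, as in the $2\times 2$ case, to a normalized matrix and then compute $(e^{-iJt}\mathbf{e}_0,\mathbf{e}_0)$ explicitly as a trigonometric function whose zeros can be analyzed by hand. First I would use persymmetry: a $3\times 3$ persymmetric Jacobi matrix has the form $\begin{pmatrix} a & b & 0\\ b & c & b\\ 0 & b & a\end{pmatrix}$, and after the shift $J\to J-cI$ (which only multiplies the overlap by a unimodular phase) and rescaling $b\to 1$ (which only rescales time), we may assume $c=0$, $b=1$, so $J=\begin{pmatrix} \alpha & 1 & 0\\ 1 & 0 & 1\\ 0 & 1 & \alpha\end{pmatrix}$ for some $\alpha\in\RR$. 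The eigenvalues are $\alpha$ (with eigenvector $(1,0,-1)/\sqrt2$) and the two roots $\lambda_\pm = \tfrac{\alpha\pm\sqrt{\alpha^2+8}}{2}$ of $\lambda^2-\alpha\lambda-2=0$, coming from the $2$-dimensional symmetric subspace.

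Next I would expand $\mathbf{e}_0$ in the eigenbasis and write
\[
(e^{-iJt}\mathbf{e}_0,\mathbf{e}_0)=\sum_{k} |c_k|^2 e^{-i\lambda_k t},
\]
where $|c_k|^2$ are the (positive) spectral weights. One finds the weight at $\lambda=\alpha$ equals $\tfrac12$, while the weights at $\lambda_\pm$ sum to $\tfrac12$; call them $w_\pm$ with $w_++w_-=\tfrac12$. So the overlap is $\tfrac12 e^{-i\alpha t}+w_+e^{-i\lambda_+ t}+w_-e^{-i\lambda_- t}$. The PST condition \eqref{PST_eigenvalues} forces the three eigenvalue gaps to be odd-integer multiples of $\pi/T_0$; in the $3\times 3$ case this means $\lambda_+-\alpha$ and $\alpha-\lambda_-$ are both odd multiples of $\pi/T_0$, and since $\lambda_++\lambda_-=\alpha$ one checks these two gaps are forced to be \emph{equal}, so $\alpha$ is the arithmetic mean and each gap equals $\pi/T_0$ (the earliest-transfer hypothesis rules out larger odd multiples). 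Writing $\theta=\pi t/T_0$, the overlap becomes $e^{-i\alpha t}\bigl(\tfrac12 + w_+e^{-i\theta}+w_-e^{i\theta}\bigr)$, whose modulus is $|\tfrac12 + (w_++w_-)\cos\theta - i(w_+-w_-)\sin\theta|$.

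For this to vanish we need simultaneously $\tfrac12+\tfrac12\cos\theta=0$ and $(w_+-w_-)\sin\theta=0$; the first gives $\theta=\pi$, i.e. $t=T_0$, contradicting $0<t<T_0$. Hence no ESE. The one place requiring care — the main obstacle — is justifying the claim that the two eigenvalue gaps must be \emph{equal} and each equal to exactly $\pi/T_0$: a priori \eqref{PST_eigenvalues} only says $\lambda_+-\alpha=(2m+1)\pi/T_0$ and $\alpha-\lambda_-=(2n+1)\pi/T_0$ for nonnegative integers $m,n$. I would close this by noting that $\lambda_+ - \alpha = \alpha - \lambda_-$ is an \emph{algebraic} constraint forced by $\lambda_++\lambda_-=\alpha$ (the trace condition), which immediately gives $m=n$; and then the hypothesis that $T_0$ is the \emph{first} transfer time forces $m=n=0$, since a common factor $2m+1>1$ in all gaps would make PST occur already at $T_0/(2m+1)$. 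With $m=n=0$ the computation above applies verbatim.
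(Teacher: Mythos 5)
Your reduction to $J=\begin{pmatrix}\alpha&1&0\\1&0&1\\0&1&\alpha\end{pmatrix}$ and the spectral data (eigenvalue $\alpha$ with weight $\tfrac12$, plus the roots $\lambda_\pm$ of $\lambda^2-\alpha\lambda-2=0$ with positive weights summing to $\tfrac12$) are correct, but the step you yourself flag as the main obstacle is where the argument fails. From the trace identity $\lambda_++\lambda_-=\alpha$ the two consecutive gaps satisfy $(\alpha-\lambda_-)-(\lambda_+-\alpha)=2\alpha-(\lambda_++\lambda_-)=\alpha$, so they are equal only when $\alpha=0$; the equality $\lambda_+-\alpha=\alpha-\lambda_-$ would require $\lambda_++\lambda_-=2\alpha$, not $\alpha$. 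Nothing in the hypotheses forces $\alpha=0$. For instance, demanding gaps $3\pi/T_0$ and $\pi/T_0$ together with $\lambda_+\lambda_-=-2$ gives $T_0=\pi\sqrt{3/2}$ and $\alpha=2\pi/T_0\ne0$, and one checks that $T_0$ is still the \emph{first} transfer time for this matrix. So your simplification of the overlap to $e^{-i\alpha t}\bigl(\tfrac12+\tfrac12\cos\theta-i(w_+-w_-)\sin\theta\bigr)$, and the conclusion $\theta=\pi$, only treat the special subfamily with $m=n$; the proposition is not proved in general.

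The gap is repairable, and the repair is essentially the paper's proof. Keep the gaps general, $\lambda_+-\alpha=(2n+1)\pi/T_0$ and $\alpha-\lambda_-=(2m+1)\pi/T_0$, and use that all three spectral weights are strictly positive with the middle weight $\tfrac12$ equal to the sum of the other two. Vanishing of the overlap at any time $t$ then forces, by the triangle inequality applied to $\bigl|w_+e^{-i\lambda_+t}+w_-e^{-i\lambda_-t}\bigr|\le w_++w_-=\tfrac12$ (equality iff the two phases coincide), the relations $e^{-i(\lambda_+-\alpha)t}=e^{-i(\lambda_--\alpha)t}=-1$. Hence both consecutive gaps are odd multiples of $\pi/t$, so by the characterization \eqref{PST_eigenvalues} PST already occurs at time $t<T_0$, contradicting the minimality of $T_0$. (The paper runs this same argument in a slightly different normalization, with the corner diagonal entries set to zero and the distinguished eigenvalue at $0$, and exhibits the contradiction concretely as $e^{-iJt}{\bf e}_0=(0,0,-1)^{\top}$.)
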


\begin{proof}
Assume that there exists $t<T_0$ such that $(e^{-iJt}{\bf e}_0,{\bf e}_0)_{\CC^{N+1}} = 0$. Note that since PST occurs, $J$ must be persymmetric and hence takes the form
\begin{equation*}
J = \begin{pmatrix}
a_0 & b_0 & 0 \\
b_0 & a_1 & b_0 \\
0 & b_0 & a_0
\end{pmatrix}
\end{equation*} 
for $a_0,a_1\in \mathbb{R}$ and $b_0>0$. As in the case of $2\times 2$-matrices, without loss of generality one can assume that $J$ has the form
\begin{equation*}
J = \begin{pmatrix}
0 & 1 & 0 \\
1 & c & 1 \\
0 & 1 & 0
\end{pmatrix}.
\end{equation*}
Let $\lambda_0<\lambda_1<\lambda_2$ be the eigenvalues of $J$. Note that since we assume PST occurs, the distinctness of the eigenvalues follows from \eqref{PST_eigenvalues}. It is evident that $\det J=0$ and so one of the eigenvalues of $J$ must be $0$. Moreover, since $(J{\bf e}_0,{\bf e}_0)=0$ and ${\bf e}_0\not\in\ker J$, the matrix $J$ must have positive and negative eigenvalues. Taking into account \eqref{PST_eigenvalues} we get that
\[
\lambda_1=-(2m+1)\frac{\pi}{T_0},\quad \lambda_2=0,\quad \lambda_3=(2n+1)\frac{\pi}{T_0}
\]
for some nonnegative integers $n$, $m$. Consequently, we get that
\[
c=\tr \,J=2(n-m)\frac{\pi}{T_0}
\]
and therefore
\begin{equation*}
J = \begin{pmatrix}
0 & 1 & 0 \\
1& \frac{2n \pi - 2n \pi }{T_0} & 1 \\
0 & 1& 0
\end{pmatrix}.
\end{equation*}
Now we can compute $e^{-iJt}$ using Sylvester's formula,
\begin{equation*}
   e^{-iJt}=
   \frac{e^{-i\lambda_1t}}{\lambda_1(\lambda_1-\lambda_3)}J(J-\lambda_3I)+\frac{1}
   {\lambda_1\lambda_3}(J-\lambda_1I)(J-\lambda_3I)+\frac{e^{-i\lambda_3t}}{(\lambda_3-\lambda_1)\lambda_3}(J-\lambda_1I)J,
\end{equation*}
where we took into account that $\lambda_2=0$. We do not need to compute the entire matrix and, in order to get to the function that we want to analyze, we just need to know that
\[
J(J-\lambda_3I){\bf e}_0=\begin{pmatrix}
1\\
\lambda_1\\
1
\end{pmatrix}, (J-\lambda_1I)(J-\lambda_3I){\bf e}_0=\begin{pmatrix}
\lambda_1\lambda_3+1\\
0\\
1
\end{pmatrix},
(J-\lambda_1I)J{\bf e}_0=\begin{pmatrix}
1\\
\lambda_3\\
1
\end{pmatrix}.
\]
The latter yields
\[
(e^{-iJt}{\bf e}_0,{\bf e_0})=\frac{e^{-i\lambda_1t}}{\lambda_1(\lambda_1-\lambda_3)}+
\frac{\lambda_1\lambda_3+1}{\lambda_1\lambda_3}+\frac{e^{-i\lambda_3t}}{(\lambda_3-\lambda_1)\lambda_3},
\]
or, for simplicity, we set 
\[
(e^{-iJt}{\bf e}_0,{\bf e_0})=\alpha e^{-i\lambda_1t}+\beta +\gamma{e^{-i\lambda_3t}},
\]
and notice that $\alpha$, $\beta$, $\gamma$ are positive numbers. Since we know that the transfer time is $T_0$, it implies $(e^{-iJT_0}{\bf e}_0,{\bf e_0})=0$ and thus we have that
\[
\beta=\alpha+\gamma.
\]
Due to our assumption there is a $t$ such that $0<t<T_0$ and 
\[
\alpha e^{-i\lambda_1t}+\beta +\gamma{e^{-i\lambda_3t}}=0,
\]
which, according to the triangle inequality, implies that $e^{-i\lambda_1t}=e^{-i\lambda_3t}=-1$. Furthermore, we get that 
\[
e^{-iJt}{\bf e}_0=\begin{pmatrix}
0\\
0\\
-1
\end{pmatrix},
\]
which shows that PST takes place at time $t<T_0$ and this contradicts our assumption.
\end{proof}

\subsection{The case of $4\times 4$-matrices.} It is not possible to reproduce the previous arguments in this case and so one has to consult Mathematica. After some computations, one can get the feeling that this case is not hopeless and after some more computations one can produce an example of a Jacobi matrix that realizes PST and has ESE. As a product of such experiments, let us consider the following Jacobi matrix 
\begin{equation}\label{4by4example}
J=\begin{pmatrix}
0&\sqrt{15}/2&0&0\\
\sqrt{15}/2&0&1&0\\
0&1&0&\sqrt{15}/2\\
0&0&\sqrt{15}/2&0
\end{pmatrix}.
\end{equation}
Let us denote the components of the evolved state by $x_j(t)$, that is,
\[
e^{-iJt}{\bf e}_0=\begin{pmatrix}
x_0(t)\\
x_1(t)\\
x_2(t)\\
x_3(t)
\end{pmatrix}.
\]
and plot $x_0$ and $x_3$ using Mathematica, see Figure \ref{Fig1}. Notice that Figure \ref{Fig1} demonstrates that the Jacobi matrix $J$ defined by \eqref{4by4example} gives an example of a $4\times 4$ Jacobi matrix with PST and ESE. We prove this statement rigorously below.
\begin{figure}[h!]
\includegraphics[width=\linewidth]{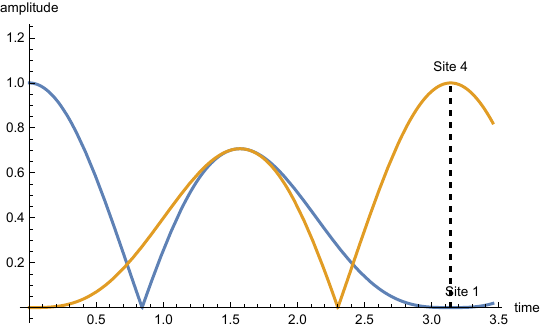}
\caption{This picture demonstrates the occurrence of ESE between $t=0.5$ and $t=1$ and PST at time $t=\pi$.}
\label{Fig1}
\end{figure}

\begin{proposition}
 There is a $4\times 4$ Jacobi matrix $J$ such that it realizes PST and has ESE.
\end{proposition}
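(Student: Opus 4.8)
The plan is to show that the explicit matrix $J$ of \eqref{4by4example} already works; since PST is equivalent to conditions (i)--(ii) of Section~1, everything reduces to a finite spectral computation together with one application of the intermediate value theorem. First I would record the structural data: $J$ has zero diagonal and off-diagonal entries $b_0=b_2=\sqrt{15}/2$, $b_1=1$, so it is symmetric and persymmetric, i.e.\ condition (i) holds, and it is already in the reduced form used in Section~2, so no shift or rescaling is needed. Iterating the three-term recurrence for the characteristic polynomials of the leading principal submatrices of $\lambda I-J$ gives $\det(\lambda I-J)=\lambda^4-\tfrac{17}{2}\lambda^2+\tfrac{225}{16}=\bigl(\lambda^2-\tfrac{25}{4}\bigr)\bigl(\lambda^2-\tfrac{9}{4}\bigr)$, so $J$ has the four simple eigenvalues $\lambda_0=-\tfrac52<\lambda_1=-\tfrac32<\lambda_2=\tfrac32<\lambda_3=\tfrac52$, with consecutive spectral gaps $1$, $3$, $1$.

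Next I would verify PST and identify the earliest transfer time. Taking $T_0=\pi$ one has $\lambda_{k+1}-\lambda_k=(2n_k+1)\pi/T_0$ with $(n_0,n_1,n_2)=(0,1,0)$, so condition (ii) holds and $J$ realizes PST at $T_0=\pi$. For any PST time $T$ the $k=0$ instance of (ii) forces $1=\lambda_1-\lambda_0=(2n_0+1)\pi/T$, hence $T=(2n_0+1)\pi\ge\pi$; combined with the previous line this shows that $T_0=\pi$ is the earliest PST time, which is what the definition of ESE presupposes.

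Then I would compute the return amplitude. Writing the spectral decomposition of $J$, one has $x_0(t):=(e^{-iJt}{\bf e}_0,{\bf e}_0)=\sum_{k=0}^{3}w_k e^{-it\lambda_k}$, where $w_k=|(v_k)_0|^2\ge0$ are the masses of the spectral measure of $(J,{\bf e}_0)$ and $\sum_k w_k=1$. Since $D:=\diag(1,-1,1,-1)$ satisfies $DJD=-J$ and $D{\bf e}_0={\bf e}_0$, that spectral measure is symmetric about $0$, so the mass at $\tfrac52$ equals the mass at $-\tfrac52$ and the mass at $\tfrac32$ equals the mass at $-\tfrac32$; together with $\sum_k w_k=1$ and the second-moment identity $\sum_k\lambda_k^2 w_k=(J^2{\bf e}_0,{\bf e}_0)=b_0^2=\tfrac{15}{4}$ this pins the weights down as $\tfrac{3}{16}$ at $\pm\tfrac52$ and $\tfrac{5}{16}$ at $\pm\tfrac32$. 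Hence $x_0(t)=\tfrac38\cos\tfrac{5t}{2}+\tfrac58\cos\tfrac{3t}{2}$, a real-valued function with $x_0(0)=1$ and $x_0(\pi)=0$ (the latter consistent with PST at $\pi$).

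Finally, $x_0(\pi/2)=\tfrac38\cos\tfrac{5\pi}{4}+\tfrac58\cos\tfrac{3\pi}{4}=-\tfrac{\sqrt2}{2}<0$, so by the intermediate value theorem $x_0$ vanishes at some $t^{\ast}\in(0,\pi/2)\subset(0,T_0)$, and therefore $J$ has ESE at $t^{\ast}$. I do not expect a genuine obstacle here, since the proposition only asks for one example and \eqref{4by4example} is it; the one step that needs a little care is the evaluation of the spectral weights $w_k$, which the symmetry $DJD=-J$ reduces to a $2\times2$ linear system. (Alternatively, $x_0(t)$ can be extracted from Sylvester's formula exactly as in the $3\times3$ proof, at the cost of a heavier computation.)
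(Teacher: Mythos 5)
Your proof is correct and rests on the same example: the matrix \eqref{4by4example}, its spectrum $\{\pm\tfrac32,\pm\tfrac52\}$, and a zero of the return amplitude before $t=\pi$. The differences are in how you get there. The paper simply exhibits the closed forms \eqref{eq:x_0for4by4}, $x_0(t)=\cos^3(t/2)(3\cos t-2)$ and $x_3(t)=-i\sin^3(t/2)(3\cos t+2)$, reads off the explicit zero $t=\arccos(2/3)$, and then checks $|x_3|<1$ there to confirm that this zero is not itself a PST event. You instead derive $x_0(t)=\tfrac38\cos\tfrac{5t}{2}+\tfrac58\cos\tfrac{3t}{2}$ from the spectral decomposition, pinning the weights down by the symmetry $DJD=-J$ together with the zeroth and second moments; this expansion is identical to the paper's formula (expanding $\cos\tfrac{5t}{2}$ and $\cos\tfrac{3t}{2}$ in powers of $\cos\tfrac t2$ recovers $6\cos^5\tfrac t2-5\cos^3\tfrac t2=\cos^3\tfrac t2\,(3\cos t-2)$), and your moment/symmetry computation is a self-contained way to obtain it without Mathematica or the spectral-surgery formula of the final Remark. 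You also handle the ``earliest transfer time'' issue more explicitly than the paper: rather than checking $|x_3(t^\ast)|<1$ at the particular zero, you note that the gap $\lambda_1-\lambda_0=1$ forces any PST time to satisfy $T=(2n_0+1)\pi\ge\pi$ via \eqref{PST_eigenvalues}, so no PST can occur before $\pi$ at all. The only thing you give up by using the intermediate value theorem is the explicit location $t^\ast=\arccos(2/3)$ of the exclusion time, which the proposition does not require. Both arguments are complete; yours is marginally more systematic and foreshadows the weight computation used in Sections 3 and 4.
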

\begin{proof}Let $J$ be the $4\times 4$ matrix defined by \eqref{4by4example}. Then the eigenvalues of $J$ are 
\[
-\frac{5}{2},\quad -\frac{3}{2},\quad \frac{3}{2}, \quad \frac{5}{2}
\]
and hence the transfer time $T_0=\pi$ due to \eqref{PST_eigenvalues}. Then, we compute 
\begin{equation}\label{eq:x_0for4by4}
 x_0(t)=\cos^3\left(\frac{t}{2}\right)(3\cos(t)-2),\quad
x_3(t)=-i\sin^3\left(\frac{t}{2}\right)(3\cos(t)+2),   
\end{equation}
from which we clearly see that if $3\cos(t)-2=0$ then 
\[
\sin^2\left(\frac{t}{2}\right)=\frac{1-\cos(t)}{2}=\frac{1}{6} \Rightarrow |x_3(t)|=
\frac{4}{6^{3/2}}<\frac{4}{4^{3/2}}=\frac{1}{2}<1,
\] 
which confirms that $J$ has ESE at $t=\arccos(2/3)<\pi$.
\end{proof}

\section{Absence of Early State Exclusion for Equidistant spectra}\label{sec:absence of ESE}
In this section we will show that a Jacobi matrix that realizes PST and whose spectrum is equidistant, that is, the quantity $\lambda_{k+1}-\lambda_k$ is a constant that does not depend on $k$, cannot have ESE. This fact is based on the properties of  Krawtchouk polynomials and, thus, we start with their basics.

For $p\in(0,1)$ and a positive integer $N$, the monic    
\textit{Krawtchouk polynomials} in $x$ are defined by
\[
K_n(x;p,N)=\sum_{j=0}^n\frac{(-n)_j(-N+j)_{n-j}}{j!}p^{n-j}(-x)_j, \quad n=0,1,\dots, N, N+1,
\]
where
\[(x)_n =\begin{cases}
   1,& n=0,\\
   x(x+1)\dots (x+n-1), & n=1,2,\dots
   \end{cases}
   \] 
is the Pochhammer symbol. Observe that
\[
K_{-1}(x;p,N) = 0, \quad K_0(x;p,N)=1,\quad K_{N+1}(x;p,N)=x(x-1)(x-2)\dots(x-N).
\]
In what follows we will only need the case $p=1/2$ and so let us set $K_n(x): =K_n(x;1/2,N)$. Then the monic Krawtchouk polynomials $K_n(x)$ satisfy the following three-term recurrence relation
   \begin{equation*}
   xK_n(x)=K_{n+1}(x)+\frac{N}{2}K_n(x)+\frac{(N+1-n)n}{4}K_{n-1}(x)
   \end{equation*} 
for $n=0,1,\dots, N$, which can be rewritten in the matrix form
\begin{equation}\label{eq:matrixrec}
x\begin{pmatrix}
K_0(x)\\
K_1(x)\\
\vdots\\
K_{N-1}(x)\\
K_N(x)
\end{pmatrix}=
\begin{pmatrix}
		N/2 & 1   &  &  \\
		{N}/4&  \ddots   & \ddots &  \\
		& \ddots & \ddots & 1 \\
		&        & {N}/4& N/2    \\
	\end{pmatrix}
\begin{pmatrix}
K_0(x)\\
K_1(x)\\
\vdots\\
K_{N-1}(x)\\
K_N(x)
\end{pmatrix}+
\begin{pmatrix}
0\\
0\\
\vdots\\
0\\
K_{N+1}(x)
\end{pmatrix}.
\end{equation}
The latter can be symmetrized by multiplying by the diagonal matrix
\[
D=\diag(1,c_1^{-1/2}, (c_1c_2)^{-1/2}, \dots, (c_1c_2\dots c_N)^{-1/2}),
\] 
where 
\[
c_{k}=\frac{(N+1-k)k}{4}, \quad k=1,2,\dots N,
\]
on the left and by introducing the normalized Krawtchouk polynomials 
$\widehat{K}_n(x)$:
\[
\widehat{K}_0(x):={K}_0(x),\quad \widehat{K}_n(x):=\frac{K_n(x)}{\sqrt{c_1}\sqrt{c_2}\dots \sqrt{c_{n}}},\quad n=1,\dots, N.
\]
In addition, we can make the diagonal of the underlying Jacobi matrix vanish by making the shift:
\[
\mathcal{X}(x):=(\widehat{K}_0(x+N/2),\dots,\widehat{K}_N(x+N/2))^{\top}=D({K}_0(x+N/2),\dots,{K}_N(x+N/2))^{\top}.
\]
Then \eqref{eq:matrixrec} takes the form 
\begin{equation}\label{eq:symmatrixrec}
x\mathcal{X}(x)=J\mathcal{X}(x)+(c_1c_2\dots c_N)^{-1/2}K_{N+1}(x+N/2){\bf e}_N,
\end{equation}
where 
\begin{equation}\label{eq:KrawtchoukJacobi}
J=\begin{pmatrix}
		0 & \sqrt{N}/2    &  &  \\
		\sqrt{N}/2&  \ddots   & \ddots &  \\
		& \ddots & \ddots & \sqrt{N}/2 \\
		&        & \sqrt{N}/2& 0    \\
	\end{pmatrix}.
\end{equation}
From \eqref{eq:symmatrixrec} we see that the zeroes
\[
   x_s = s - \frac{N}{2}, \quad s = 0, 1, 2, ..., N
\]
of $K_{N+1}(x+N/2)$ are the eigenvalues of the Jacobi matrix $J$ given by \eqref{eq:KrawtchoukJacobi} and therefore the vectors $\mathcal{X}(x_s)$ are the eigenvectors of $J$. Now we are in the position to prove the desired result.

\begin{theorem}\label{thrm:equidistantevals}
Assume that a persymmetric Jacobi matrix J of order $N+1$ realizes PST and that its eigenvalues satisfy
\[
\lambda_1-\lambda_0=\lambda_2-\lambda_1=\dots=\lambda_N-\lambda_{N-1},
\]
then ESE is impossible. 
\end{theorem}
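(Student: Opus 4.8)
The plan is to prove that, after an affine normalization, $J$ must coincide with the Krawtchouk Jacobi matrix \eqref{eq:KrawtchoukJacobi}, for which the amplitude $(e^{-iJt}{\bf e}_0,{\bf e}_0)$ can be computed in closed form and seen to vanish precisely at the PST times, so that no zero occurs strictly before $T_0$.

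First I would normalize. Since $J$ realizes PST, \eqref{PST_eigenvalues} gives $\lambda_{k+1}-\lambda_k=(2n_k+1)\pi/T_0$, and the equidistance hypothesis forces all the $2n_k+1$ to be equal; hence there is a common gap $d=(2n+1)\pi/T_0>0$ and $\lambda_s=\lambda_0+sd$. The map $J\mapsto(J-\bar\lambda I)/d$ with $\bar\lambda=\tr J/(N+1)$ the spectral average preserves persymmetry, multiplies $(e^{-iJt}{\bf e}_0,{\bf e}_0)$ by the never-vanishing factor $e^{-i\bar\lambda t}$, and reparametrizes time linearly; so it suffices to handle the case where the spectrum is exactly $\{x_s=s-N/2:s=0,\dots,N\}$.

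Next I would identify the normalized matrix. Here I would invoke the classical inverse-spectral fact that a persymmetric Jacobi matrix is uniquely determined by its spectrum (persymmetry together with the eigenvalues forces the off-diagonal entries). Since the matrix \eqref{eq:KrawtchoukJacobi} is persymmetric and, by \eqref{eq:symmatrixrec}, has precisely the spectrum $\{x_s\}$, the normalized $J$ must equal \eqref{eq:KrawtchoukJacobi}. Its normalized eigenvectors are the $\mathcal{X}(x_s)$, whose zeroth entry is $\widehat K_0(x_s+N/2)=1$, so the spectral weights $w_s=\|\mathcal X(x_s)\|^{-2}$ are the Krawtchouk--$1/2$ weights $w_s=\binom{N}{s}2^{-N}$. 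Therefore
\[
(e^{-iJt}{\bf e}_0,{\bf e}_0)=\sum_{s=0}^N w_s e^{-ix_st}=2^{-N}e^{iNt/2}\sum_{s=0}^N\binom{N}{s}e^{-ist}=2^{-N}e^{iNt/2}\bigl(1+e^{-it}\bigr)^N=\cos^N\!\Bigl(\tfrac t2\Bigr),
\]
using $1+e^{-it}=2e^{-it/2}\cos(t/2)$. Returning to the original normalization, $(e^{-iJt}{\bf e}_0,{\bf e}_0)=e^{-i\bar\lambda t}\cos^N(dt/2)$, whose zeros are exactly $t\in\{(2m+1)\pi/d:m\ge 0\}$. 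On the other hand, by \eqref{PST_eigenvalues} the matrix $J$ (with gap $d$) has PST at a time $\tau$ precisely when $\tau=(2m+1)\pi/d$ for some nonnegative integer $m$, so the earliest PST time is $T_0=\pi/d$. Thus the zero set of $t\mapsto(e^{-iJt}{\bf e}_0,{\bf e}_0)$ coincides with the set of PST times and contains no point of $(0,T_0)$, so the condition in the definition of ESE cannot be satisfied.

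The main obstacle is the identification step: one genuinely needs the uniqueness of persymmetric Jacobi matrices with prescribed spectrum (equivalently, that equally spaced eigenvalues force the binomial weights $\binom{N}{s}2^{-N}$). A direct attack working only from persymmetry and the relation $\sum_s w_s e^{-ist}=0$ does not suffice, because a polynomial with positive, palindromic coefficients can vanish at unimodular points other than $z=-1$; it is the special binomial form of the weights, collapsing the amplitude to $\cos^N(t/2)$, that makes the argument go through.
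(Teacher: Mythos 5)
Your proof is correct and follows essentially the same route as the paper: normalize the spectrum to $\{s-N/2\}$, invoke Hochstadt's uniqueness theorem for persymmetric Jacobi matrices to identify $J$ with the Krawtchouk matrix \eqref{eq:KrawtchoukJacobi}, and collapse the amplitude to $\cos^N(t/2)$ via the binomial weights (which the paper derives through Christoffel--Darboux and the persymmetry identity $\widehat{K}_N(s)=(-1)^{N+s}$ rather than quoting). The only cosmetic difference is the ending: the paper shows that a zero of $x_0$ forces $|x_N(t)|=1$ and hence PST at that time, whereas you observe directly that the zero set of $\cos^N(dt/2)$ coincides with the set of admissible PST times $(2m+1)\pi/d$, the earliest of which is $T_0$; both close the argument.
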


\begin{proof} 
Applying the same argument we used in Section 2, we can assume without loss of generality that 
\[
\lambda_1-\lambda_0=\lambda_2-\lambda_1=\dots=\lambda_N-\lambda_{N-1}=1
\]
In addition, as before, the choice of $\lambda_0$ essentially functions as a phase factor for the unitary matrix in question. As such, we can freely set $\lambda_0$ = $- \dfrac{N}{2}$, which yields
$$\lambda_s=x_s = s - \frac{N}{2}, \quad s = 0, 1, 2, ..., N.$$
Since $J$ realizes PST, it is persymmetric. As a result, it coincides with the Jacobi matrix $J$ given by \eqref{eq:KrawtchoukJacobi} due to the Hochstadt uniqueness theorem \cite[Theorem 3]{H74}.   
 Then, we can write 
$${\bf e}_0 = \sum_{s = 0}^{N} f_s\frac{\mathcal{X}(x_s)}{\|\mathcal{X}(x_s)\|}, \quad
f_s=\left({\bf e}_0,\frac{\mathcal{X}(x_s)}{\|\mathcal{X}(x_s)\|}\right).$$
Clearly, we have that
\[
f_s=\frac{1}{{\|\mathcal{X}(x_s)\|}}\left({\bf e}_0,{\mathcal{X}(x_s)}\right)=\frac{1}{{\|\mathcal{X}(x_s)\|}}K_0(x_s+N/2)=\frac{1}{{\|\mathcal{X}(x_s)\|}}.
\]
As a result we get that
$${\bf e}_0 = \sum_{s = 0}^{N}\frac{\mathcal{X}(x_s)}{\|\mathcal{X}(x_s)\|^2}.$$
Next, by the Christoffel-Darboux formula we obtain that
\[
\|\mathcal{X}(x_s)\|^2=\sum_{l=0}^N\widehat{K}_l(s)^2=h_N^{-1}{\widehat{K}_N(s) K_{N+1}'(s)}, \quad s=0,1,\dots, N,
\]
where $h_N$ is a positive constant, whose value is not important for now. Consequently, we arrive at the following relation
 \begin{equation}\label{eq:Krawtchouk_evolution}e^{-iJt}{\bf e}_0 =  \sum_{s= 0 }^{N} w(x_s)e^{-ix_st}\mathcal{X}(x_s)
 \end{equation}
where, for convenience, we have set
 $$w(x_s) = \frac{h_N}{\widehat{K}_N(s) K_{N+1}'(s)}, \quad s=0,1,2,\dots, N.$$
 In particular, we have 
\begin{equation}\label{eq:x0}
x_0(t) = (e^{-iJt}{\bf e}_0,{\bf e}_0)= \sum_{s = 0}^{N} w(x_s)e^{-ix_st}.
    \end{equation}
Since $J$ is persymmetric, we know that $\widehat{K}_N(s)=(-1)^{N+s}$ (see \cite[Corollary 6.2]{DMS} or \cite{VZh12}) and so we can write
\begin{equation}\label{eq:weights}w(x_s) = \frac{(-1)^{N+s} h_N}{K_{N+1}'(s)}, \quad s=0,1,2,\dots, N.
\end{equation}
Since
$$K_{N+1}(x) = \prod_{k=0}^{N}(x-k),$$
we have
\[K_{N+1}'(x) = \sum_{s=0}^{N}\prod_{k=0, k\neq s}^{N} (x-k)\] and hence  \[K_{N+1}'(s) = \prod_{k=0, k \neq s}^{N}(s-k).\]
By induction we get that 
$$\frac{(-1)^{N+s}}{K_{N+1}'(s)} = \frac{(-1)^{N+s}}{\prod_{k = 0, k \neq s}^{N} (s-k)} = \frac{1}{s! (N-s)!}$$
and thus we can write 
$$x_0(t) = \sum_{s = 0 }^{N} \frac{h_N}{s!(N-s)!} e^{-ix_st}.$$
Now, if $J$ has ESE then $x_0(t)=0$, that is,
$$ x_0(t) = \sum_{s = 0 }^{N} \frac{h_N}{s!(N-s)!} e^{-ix_st} = \sum_{s = 0 }^{N} \frac{h_N}{s!(N-s)!} e^{-i(s-N/2)t}=0,$$
multiplying the above relation by $\frac{N!}{h_N \cdot 2^N}$ yields 
$$\sum_{s = 0 }^{N} \frac{1}{2^N}{N\choose s} (e^{i\frac{t}{2}})^{N-s}(e^{-i \frac{t}{2}})^s= \cos^N\left(\frac{t}{2}\right)=0.$$
Therefore, we see that $\cos(\frac{t}{2}$) = 0. We now claim that this implies that 
\[
|x_N(t)|=|(e^{-iJt}{\bf e}_0,{\bf e}_N)|=1.
\]
Firstly, note that if $\cos(\frac{t}{2})=0$ then $\sin(\frac{t}{2}) = \pm1$. Then, similarly to what we did to get the expansion of ${\bf e}_0$, we obtain 
$${\bf e}_N = \sum_{s=0}^{N} (-1)^{N+s}\frac{\mathcal{X}(x_s)}{\|\mathcal{X}(x_s)\|^2}$$
and hence
 \begin{equation}\label{eq:xN}
 x_N(t) = (e^{-iJt}{\bf e}_0,{\bf e}_N) = \sum_{s=0}^{N} (-1)^{N+s} w(x_s) e^{-i x_s t}= \sum_{s=0}^N(-1)^{N+s}\frac{h_N}{s!(N-s)!}e^{-ix_st}.
 \end{equation}
Then since $e^{-iJt}$ is unitary, it follows from \eqref{eq:x0} that \[x_0(0)=\sum_{s=0}^Nw(x_s)=1\]
and so using the formula for $w(x_s)$ derived above, we have
\begin{equation*}
     h_N = \frac{1}{\sum_{s=0}^{N} \frac{1}{s!(N-s)!}}    
\end{equation*} so that 
\begin{equation*}
     \frac{h_N 2^N }{N!} = \frac{2^N}{\sum_{s=0}^{N} \frac{N!}{s!(N-s)!}}=1 
\end{equation*} and hence $h_N=\frac{N!}{2^N}$.
Therefore, multiplying \eqref{eq:xN} by $\left(\frac{-1}{i}\right)^N$, we have 
    $$\left(\frac{-1}{i}\right)^N x_N(t) = \sum_{s=0}^{N} \left(\frac{1}{2i}\right)^N {N \choose s} (-e^{-i\frac{t}{2}})^s (e^{i\frac{t}{2}})^{N-s} = 
    \sin^N\left(\frac{t}{2}\right) = (\pm1)^N.$$
Henceforth, for a persymmetric matrix with equidistant spectrum exhibiting PST, if there exists a time $t$ such that $x_0(t) = 0$ then  $|x_N(t)| = 1$. This directly implies early state exclusion is impossible. 
    \end{proof}

\section{Appearance of Early State Exclusion for Some Non-Equidistant Spectra}

In this section we construct a family of Jacobi matrices that realize PST and have ESE. To this end, we need to recall some basics of Chebyshev polynomials, which are defined by the relation
\begin{equation*}
T_n(x)=\cos(n \theta),\quad x=\cos\theta
\end{equation*}  
for $n=0,1,\dots$.
If the range of the variable $x$ is the interval $[-1,1]$, then the range of the corresponding variable $\theta$ can be taken as $[0, \pi]$. It is not so difficult to verify that they satisfy the orthogonality relation
\begin{equation}\label{eq:ChebyshevOrth}
\int_{-1}^{1}T_n(x)T_k(x)\frac{dx}{\sqrt{1-x^2}}=0, \quad n\ne k.   
\end{equation}
It is well known that all the zeroes of $T_n$ belong to $(-1,1)$. Moreover, it is also known that at most one zero of the quasi-orthogonal polynomial 
\[
T_{n+1}(x)+AT_n(x)
\]
of rank $1$ lies outside $(-1,1)$, see \cite[Chapter II, Theorem 5.2]{Ch78}. One can easily adapt one of the available proofs of this fact to a more general linear combination, which we will do here for the convenience of the reader.
\begin{lemma}\label{lemma:mzeroes}
    The quasi-orthogonal polynomial
\[
Q(x):=A_{n+1}T_{n+1}(x)+A_nT_n(x)+\dots+A_kT_k(x), \quad A_k\ne0, \quad k\leq n+1,
\]
has at least $k$ distinct zeros on $(-1,1)$.
\end{lemma}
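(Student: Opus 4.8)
The plan is to argue by contradiction, exploiting the orthogonality relation \eqref{eq:ChebyshevOrth} in the way that is classical for quasi-orthogonal polynomials; this is the same mechanism as in \cite[Chapter II, Theorem 5.2]{Ch78}, merely applied to a longer tail $T_k,T_{k+1},\dots,T_{n+1}$ rather than to $T_n,T_{n+1}$. We may assume $k\geq 1$, since for $k=0$ there is nothing to prove, and we record that $Q\not\equiv 0$: indeed $A_k\neq 0$ and $T_k,T_{k+1},\dots,T_{n+1}$ are linearly independent (they have distinct degrees). Now suppose, for contradiction, that $Q$ has at most $k-1$ distinct zeros on $(-1,1)$. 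Then $Q$ changes sign on $(-1,1)$ at only finitely many points $-1<x_1<x_2<\dots<x_m<1$ with $m\leq k-1$.

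Set
\[
\pi(x):=\prod_{j=1}^{m}(x-x_j),
\]
with the convention $\pi\equiv 1$ when $m=0$. Then $\deg\pi=m\leq k-1$, and by construction $Q(x)\pi(x)$ does not change sign on $(-1,1)$: across each $x_j$ both factors reverse sign, while on each of the complementary subintervals both factors keep a constant sign. Since $Q\pi$ is a nonzero polynomial that is sign-definite on $(-1,1)$ and the weight $1/\sqrt{1-x^2}$ is strictly positive there, we obtain
\[
\int_{-1}^{1} Q(x)\pi(x)\,\frac{dx}{\sqrt{1-x^2}}\neq 0.
\]

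On the other hand, $\{T_0,T_1,\dots,T_{k-1}\}$ is a basis of the space of polynomials of degree at most $k-1$, so we may write $\pi=\sum_{i=0}^{k-1}c_iT_i$. Every index appearing in $Q$ is at least $k$, hence by \eqref{eq:ChebyshevOrth} we have $\int_{-1}^{1}T_\ell(x)T_i(x)\,\frac{dx}{\sqrt{1-x^2}}=0$ for all $\ell\geq k$ and all $i\leq k-1$. Expanding the product $Q\pi=\big(\sum_{\ell=k}^{n+1}A_\ell T_\ell\big)\big(\sum_{i=0}^{k-1}c_iT_i\big)$ and integrating term by term therefore gives $\int_{-1}^{1}Q(x)\pi(x)\,\frac{dx}{\sqrt{1-x^2}}=0$, contradicting the previous display. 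Consequently $Q$ changes sign at least $k$ times on $(-1,1)$, and in particular it has at least $k$ distinct zeros there.

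The argument is essentially routine; the only point demanding a little care is the bookkeeping that distinguishes \emph{sign changes} of $Q$ from zeros of even multiplicity — one needs, and counts, only the former, after which $k$ sign changes automatically yield $k$ distinct zeros. The one place where the hypothesis $A_k\neq 0$ is genuinely used is in guaranteeing $Q\not\equiv 0$, so that the sign-definite integral above is actually nonzero.
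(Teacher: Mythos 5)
Your proof is correct and follows essentially the same route as the paper's: assume fewer than $k$ sign changes, multiply $Q$ by the polynomial vanishing at the sign-change points, and derive a contradiction between the sign-definiteness of the resulting integrand and the orthogonality relation \eqref{eq:ChebyshevOrth}. Your write-up is slightly more careful than the paper's (you note $Q\not\equiv 0$, handle the $m=0$ convention, and make explicit the expansion of $\pi$ in the Chebyshev basis), but the argument is the same.
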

\begin{proof}
Assume that $y_r$, $r=1,\dots, l$ are the points of the interval $(-1,1)$ at which $Q$ changes sign and also assume that $l<k$. Then the polynomial
\[
Q(x)\prod_{r=1}^l(x-y_r)
\]
does not change the sign on $(-1,1)$ and hence
\begin{equation}\label{eq:nonvanishingint}
 \int_{-1}^{1}Q(x)\prod_{r=1}^l(x-y_r)\frac{dx}{\sqrt{1-x^2}}\ne 0.
\end{equation}
On the other hand, since $\deg\displaystyle{\prod_{r=1}^l(x-y_r)}<k$ from \eqref{eq:ChebyshevOrth} we get
\[
\int_{-1}^{1}Q(x)\prod_{r=1}^l(x-y_r)\frac{dx}{\sqrt{1-x^2}}=\sum_{j=k}^{n+1}A_j\int_{-1}^{1}T_j(x)\prod_{r=1}^l(x-y_r)\frac{dx}{\sqrt{1-x^2}}=0,
\]
which contradicts \eqref{eq:nonvanishingint}. Therefore, $Q$ must change the sign at least $k$ times on $(-1,1)$.
\end{proof}
The above statement allows us to prove the following result.
\begin{theorem}
Let $N=2n-1$, where $n>1$ is a positive integer.  Then there exists a Jacobi matrix of order $N+1$ that realizes PST at time $T_0$ and has ESE. Moreover, for a given positive number $m$, it is possible to construct such a Jacobi matrix so that it has $m$ cases of ESE, i.e. there exist $m$ times $\{t_j\}_{j=1}^m$ where $x_0(t_j)=0$ and $t_j<T_0$.
\end{theorem}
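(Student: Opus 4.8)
The plan is to choose the spectrum of $J$ so that $x_0(t)=(e^{-iJt}{\bf e}_0,{\bf e}_0)$ becomes, after the substitution $x=\cos(t/2)$, a linear combination of Chebyshev polynomials of the first kind of the type covered by Lemma~\ref{lemma:mzeroes}; that lemma then forces many zeros of $x_0$ strictly between $0$ and the transfer time, i.e.\ many ESE times.

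Here are the steps. Fix $N=2n-1$ with $n>1$ and a positive integer $m$ (the desired number of exclusion times). I would take the symmetric eigenvalue set
\[
S=\{\pm\mu_j:\ j=1,\dots,n\},\qquad \mu_j=m+j-\tfrac{1}{2},
\]
so that the sorted eigenvalues of $J$ have consecutive gaps $1,\dots,1,\,2m+1,\,1,\dots,1$ (the large gap being $2\mu_1=2m+1$). All gaps are odd, so by the characterization recalled in Section~1 any persymmetric Jacobi matrix with spectrum $S$ realizes PST with $T_0=\pi$; moreover, since one gap equals $1$ — this is where $n>1$ is used — no larger value of $\pi/T_0$ is admissible, so $\pi$ is the \emph{earliest} transfer time. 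I would then invoke the inverse spectral theory for persymmetric Jacobi matrices to produce such a $J$; it is unique by Hochstadt's theorem \cite[Theorem~3]{H74}. Reflecting the spectral measure through $x\mapsto -x$ yields a persymmetric Jacobi matrix with spectrum $-S=S$, which must coincide with $J$ by that uniqueness; therefore $J$ has zero diagonal and a symmetric spectral measure, so the weights of $\mu_j$ and $-\mu_j$ agree — call their common value $w_j>0$ (with $\sum_j 2w_j=1$).

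Consequently
\[
x_0(t)=\sum_{\lambda\in S}w_\lambda e^{-i\lambda t}=2\sum_{j=1}^n w_j\cos(\mu_j t),
\]
and since $\mu_j=(2k_j-1)/2$ with $k_j=m+j$ we have $\cos(\mu_j t)=\cos\!\big((2k_j-1)\tfrac{t}{2}\big)=T_{2k_j-1}(\cos\tfrac{t}{2})$, so $x_0(t)=2Q(\cos\tfrac{t}{2})$ where
\[
Q(x)=\sum_{j=1}^n w_j\,T_{2k_j-1}(x)=w_1T_{2m+1}(x)+w_2T_{2m+3}(x)+\dots+w_nT_{2(m+n)-1}(x).
\]
This $Q$ is an odd polynomial whose lowest-degree term $w_1T_{2m+1}$ has nonzero coefficient, so Lemma~\ref{lemma:mzeroes} (applied with $k=2m+1$) gives at least $2m+1$ distinct zeros of $Q$ in $(-1,1)$. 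As $Q$ is odd, $0$ is one of these zeros and the remaining (at least $2m$) nonzero zeros occur in pairs $\{x,-x\}$; hence $Q$ has at least $m$ zeros $x_1,\dots,x_m$ in $(0,1)$. Setting $t_l=2\arccos(x_l)\in(0,\pi)=(0,T_0)$ gives $x_0(t_l)=2Q(x_l)=0$, so $t_1,\dots,t_m$ are $m$ distinct ESE times; the case $m=1$ shows in particular that $J$ has ESE.

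I expect the genuinely delicate parts to be the spectral bookkeeping — verifying that all gaps are odd (so PST occurs at $\pi$) and that the unit gap makes $\pi$ the \emph{earliest} transfer time, which is exactly what makes the $t_l$ count as early state exclusion — together with the one ingredient borrowed wholesale, namely the existence of a persymmetric Jacobi matrix with prescribed spectrum $S$ (Hochstadt's theorem as cited supplies only uniqueness). For the latter one should either cite an existence statement for this inverse problem or construct $J$ explicitly by splitting it under the reversal involution into two $n\times n$ Jacobi matrices whose spectra interlace and are negatives of one another; for $n=2$, $m=1$ this reproduces the matrix \eqref{4by4example}, which is a useful consistency check.
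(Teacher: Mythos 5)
Your proposal is correct and follows essentially the same route as the paper: the identical symmetric spectrum $\pm(2m+1)/2,\dots,\pm(2m+2n-1)/2$ with a single gap of $2m+1$ in the middle, reduction of $x_0(t)$ to a combination $\sum_j w_j T_{2(m+j)-1}(\cos\frac{t}{2})$, and an application of Lemma~\ref{lemma:mzeroes} with $k=2m+1$ followed by the same symmetry count to locate $m$ zeros in $(0,\pi)$. Your added care about the weight symmetry (via reflection and Hochstadt uniqueness), about $\pi$ being the \emph{earliest} transfer time, and about existence versus uniqueness in the inverse spectral step are worthwhile refinements of details the paper passes over quickly, but they do not change the argument.
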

\begin{proof}
Fix two positive integers $n>1$ and $m$, and define a finite sequence of numbers $\{\lambda_s\}_{s=0}^{2n-1}$ in the following way:
\begin{equation}\label{eq:uppereigenvalues}
\lambda_{n+k}=\frac{2m+2k+1}{2}, \quad k=0,1,\dots, n-1
\end{equation}
and 
\begin{equation}\label{eq:lowereigenvalues}
\lambda_{k}=-\lambda_{2n-1-k}, \quad k=0,1,\dots, n-1,
\end{equation}
where $m$ is a positive integer. In other words, this sequence has the property that the gaps between the eigenvalues are the same except the middle one:
\[
\lambda_{n}-\lambda_{n-1}=2m+1,\quad \lambda_{k+1}-\lambda_k=1, \quad k\ne n-1.
\]
We note that equations \eqref{eq:uppereigenvalues} and \eqref{eq:lowereigenvalues} give 
\[
\lambda_s=-\frac{2n+2m-2s-1}{2}, \quad s=0,1,2,\dots, 2n-1.\]
Next, by the Hochstadt theorem \cite[Theorem 3]{H74}, there is a unique persymmetric Jacobi matrix of order $2n$ whose eigenvalues are $\lambda_s's$. Evidently, the eigenvalues satisfy \eqref{PST_eigenvalues} with $T_0=\pi$ and thus $J$ realizes PST with the earliest transfer time $T_0=\pi$. This matrix also defines a finite family of orthogonal polynomials like in the case of Krawtchouk polynomials discussed in Section \ref{sec:absence of ESE}. In fact, following similar arguments, we can establish that
\begin{equation}
x_0(t) = (e^{-iJt}{\bf e}_0,{\bf e}_0)= \sum_{s = 0}^{2n-1} w(\lambda_s)e^{-i\lambda_st}
\end{equation}
with $w(\lambda_s)=w(\lambda_{2n-1-s})$ for $s=0,1,2,\dots, 2n-1$. As a result, we get 
$$x_0(t) = \sum_{s=0}^{n-1} 2 \cdot {w}(\lambda_s) \cdot \cos{\left((2n+2m-2s-1)\frac{t}{2}\right)},$$
which is a linear combination of Chebyshev polynomials. Namely, setting $x=\cos(t/2)$ the linear combination takes the form
$$x_0(t) = A_{2m+1}T_{2m+1}(x) + ... + A_{2n+2m-1}T_{2n+2m-1}(x),$$
where $A_{2m+1}=2w(\lambda_{n-1})\ne 0$. By Lemma \ref{lemma:mzeroes}, the linear combination as a polynomial in $x$ has at least $2m+1$ distinct zeroes on $(-1,1)$. In turn, it implies that $x_0(t)$, as a function of $t$, has $2m+1$ zeroes on $(0,2\pi)$. We know that one of them is at $t=T_0=\pi$ and since 
\[
|x_0(t+\pi)|=|x_0(t-\pi)|,
\]
we have that there are at least $m$ zeroes on $(0,\pi)$, meaning that there are $m$ cases of ESE.     
\end{proof}

\begin{remark}
  It is worth noting that one can establish existence of Jacobi matrices with ESE by applying a method of spectral surgery (see \cite{VZh12}) to Krawtchouk polynomials. Specifically, for an odd integer $N$, one may "remove" the two inner-most eigenvalues of an $(N+3) \times (N+3)$ Jacobi matrix corresponding to Krawtchouk polynomials and obtain a new $(N+1) \times (N+1)$ Jacobi matrix with initial state probability amplitude, $\tilde{x}_0(t)$, which is given in terms of the probability amplitude, $x_0(t)$, of an $(N+1) \times (N+1)$ Jacobi matrix corresponding to Krawtchouk polynomials. Namely, 
  \begin{equation}\label{eq:surgeryamp}
\tilde{x}_0(t)=\left(\left(\frac{N+1}{2}+1\right)\cos(t)-\left(\frac{N+1}{2}\right)\right)\cos^N\left(\frac{t}{2}\right), 
\end{equation} where $\cos^N\left(\frac{t}{2}\right)=x_0(t)$. Then we can see the presence of ESE from equation \eqref{eq:surgeryamp}. In particular, if set $N=3$ we get the matrix \eqref{4by4example} and the corresponding probability amplitude \eqref{eq:x_0for4by4}. Nevertheless, the proof that we propose in this section is very general and gives a great variety of outcomes. It only relies on the symmetry of the eigenvalues and a gap in the middle. It should be noted that the proof does not require the rest of the gaps to be uniform. 
\end{remark}

\noindent {\bf Acknowledgments.} The authors acknowledge the support of the NSF DMS grant 2349433. They are also indebted to Prof. Luke Rogers and Prof. Sasha Teplyaev for the opportunity to participate in UConn REU site in Summer 2024 and for many helpful and productive discussions. M.D. was also partially supported by the NSF DMS grant 2008844. Finally, M.D. is extremely grateful to Prof. Christino Tamon for bringing the ESE problem to his attention.

\end{document}